\newtheorem{theorem}{Theorem}
\newtheorem{proposition}{Proposition}
\newenvironment{proof}[1][Proof.]{\begin{trivlist}
\item[\hskip \labelsep {\bfseries #1}]}{\end{trivlist}}
\newenvironment{acknowledgement}[1][Acknowledgement]{\begin{trivlist}
\item[\hskip \labelsep {\bfseries #1}]}{\end{trivlist}}
\newcommand{\AmS}{{\protect\the\textfont2
  A\kern-.1667em\lower.5ex\hbox{M}\kern-.125emS}}
\begin{document}

\title{On a property of the $n$-dimensional cube}

\author{Rafayel Kamalian \address[MCSD]{Institute for Informatics and Automation Problems,\\
National Academy of Sciences of Republic of Armenia, 0014,
Armenia}
\thanks{email: rrkamalian@yahoo.com}
and
        Arpine Khachatryan
        \address{Ijevan Branch of Yerevan State University, 4001, Armenia}
\thanks{email: khachatryanarpine@gmail.com}
                        }


\runtitle{On a property of the $n$-dimensional cube}
\runauthor{Rafayel Kamalian, Arpine Khachatryan}
\maketitle

\begin{abstract}
We show that in any subset of the vertices of $n$-dimensional cube that contains at least $2^{n-1}+1$ vertices ($n\geq 4$), there are four vertices that induce a claw, or there are eight vertices that induce the cycle of length eight.
\end{abstract}

\section{Introduction and definitions}

We consider finite graphs $G=(V,E)$ with vertex set $V$ and edge set $E$. The graphs contain no multiple edges or loops. The $n$-dimensional cube will be denoted by $Q_{n}$, and claw is the complete bipartite graph $K_{1,3}$. Moreover, the vertex of degree three in the claw is called a claw-center. Non-defined terms and concepts can be found in \cite{Harary}.

The main result of the paper is the following:

\begin{theorem} Let $n\geq 4$ and let $V'\subseteq V(Q_{n})$. If $|V'|\geq 2^{n-1}+1$, then at least one of the following two conditions holds:
\begin{itemize}
	\item [(a)] there are four vertices in $V'$ that induce a claw;
	\item [(b)] there are eight vertices that induce a simple cycle.
\end{itemize}
\end{theorem}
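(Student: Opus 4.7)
My plan is induction on $n$, with the only substantive work at the base case $n = 4$. For the inductive step $n \geq 5$, I would split $Q_n$ along a fixed coordinate into two induced copies $Q^{(0)}, Q^{(1)}$ of $Q_{n-1}$ and set $A = V' \cap V(Q^{(0)})$, $B = V' \cap V(Q^{(1)})$. From $|A|+|B| \geq 2^{n-1}+1$ and $|A|, |B| \leq 2^{n-1}$, pigeonhole gives $\max(|A|,|B|) \geq 2^{n-2}+1$; applying the inductive hypothesis inside the larger half produces a claw or induced $C_8$, which lies in $V'$ because $Q^{(i)}$ is an induced subgraph of $Q_n$. Thus the theorem reduces to $n = 4$.

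For $n = 4$, by monotonicity (if some $9$-subset of $V'$ carries the desired structure, so does $V'$) it suffices to take $|V'| = 9$ and assume $V'$ induces neither a claw nor a $C_8$. Any three neighbors of a fixed vertex of $Q_n$ differ pairwise in exactly two coordinates, hence are non-adjacent, so any $V'$-vertex with three $V'$-neighbors would be a claw-center. Therefore $G[V']$ has maximum degree at most $2$ and, by bipartiteness of $Q_4$, is a disjoint union of paths and even cycles. Each cycle component is an induced cycle, so the no-$C_8$ assumption rules out $C_8$ components; with $|V'|=9$ only $C_4$ and $C_6$ components can occur. I would dispose of the cycle cases by a local adjacency count: a $C_4$ in $Q_4$ has exactly four vertices non-adjacent to it (the antipodal 2-face), and since $C_4$-vertices already have $V'$-degree $2$, any further $V'$-vertex adjacent to the $C_4$ would produce a claw; so the remaining five $V'$-vertices must all fit in these four slots, impossible. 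Likewise, an induced $C_6$ in $Q_4$ lies in a unique 3-face and has only two non-neighbors in $Q_4$, too few for the three remaining $V'$-vertices.

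The path-only case is the main obstacle. Writing $a = |V' \cap V_0|$ and $b = |V' \cap V_1|$ for the bipartition sizes, I would first show $a, b \in \{4,5\}$: if $a \geq 6$, each of the $b \leq 3$ vertices of $V' \cap V_1$ needs $\geq 2$ of its four $V_0$-neighbors in the $\leq 2$-vertex set $V_0 \setminus V'$, but any two distinct $V_0$-vertices share at most two common $V_1$-neighbors in $Q_4$, so at most two $V_1$-vertices can meet this demand --- contradiction. WLOG $a = 5$, $b = 4$; double-counting edges between $V' \cap V_1$ and $V_0 \setminus V'$ then produces a vertex $u \in V_0 \setminus V'$ with at least three $V'$-neighbors $v_1, v_2, v_3 \in V_1$. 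These four vertices span a 3-subcube $S \subseteq Q_4$ in which the three vertices $w_{ij} = (N(v_i) \cap N(v_j)) \setminus \{u\}$ (for $\{i,j\} \subseteq \{1,2,3\}$) are the ones at graph-distance $2$ from $u$. Split on how many of $w_{12}, w_{13}, w_{23}$ lie in $V'$. If all three do, $\{v_1, v_2, v_3, w_{12}, w_{13}, w_{23}\}$ induces a $C_6$, contradicting the path-only shape. If exactly one does, the no-claw constraint on $v_1, v_2, v_3$ pins down $V' \cap V_0$ tightly enough that every choice of the fourth vertex of $V' \cap V_1$ yields a claw. If exactly two do, the analogous forcing leaves a unique claw-free completion of $V' \cap V_1$, and a direct check shows the eight degree-$2$ vertices of the resulting $V'$ assemble into an induced $C_8$, contradicting the no-$C_8$ assumption.

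I expect this path-only analysis to be by far the hardest part: the cycle-component cases reduce to one-line adjacency counts, whereas the 3-subcube argument around $u$ must be pushed through three subcases whose contradictions come from three genuinely different forbidden structures (an induced $C_6$, a claw, or an induced $C_8$).
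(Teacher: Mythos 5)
Your inductive step is identical to the paper's (split along a coordinate into two copies of $Q_{n-1}$, pigeonhole, apply the hypothesis in the larger half), but your base case $n=4$ --- which is where all the content lives --- takes a genuinely different route. The paper again splits $Q_4$ into two $Q_3$'s, sets $V_i'=V_i\cap V'$ with $|V_1'|\ge|V_2'|$, and runs through the cases $(8,1),(7,2),(6,3),(5,4)$, resolving the first three by ``direct verification'' and the last by showing $V_1'$ induces a path on five vertices and then checking the five possibilities for $V_2'$ by hand. You instead split by the parity bipartition $V_0\cup V_1$ of $Q_4$: the observation that neighbors of a common vertex are pairwise non-adjacent gives maximum degree $2$ in $G[V']$, the $C_4$ and $C_6$ components die by counting non-neighbors ($4$ and $2$ respectively, too few for the leftover vertices), the bound $a,b\in\{4,5\}$ follows from common-neighbor counting, and the final analysis is organized around a vertex $u\in V_0\setminus V'$ with three $V'$-neighbors. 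I checked your two undetailed subcases and they do close: with $u=0000$, $v_i=e_i$, ``exactly one $w_{ij}\in V'$'' forces $V_0\setminus V'$ to be $\{u\}$ plus two of the $w$'s, after which every admissible fourth vertex of $V'\cap V_1$ is itself a claw-center; ``exactly two'' forces $V_0\setminus V'=\{0000,0110,1001\}$ (up to symmetry) and the fourth vertex to be $0001$, yielding exactly the extremal configuration the paper alludes to --- an induced $C_8$ plus the isolated vertex $1111$. Two small points to add when you write this up: the case where no $w_{ij}$ lies in $V'$ must be explicitly dismissed (it is vacuous because $u$ and all three $w$'s cannot fit into the $3$-element set $V_0\setminus V'$), and the claim that an induced $C_6$ in $Q_4$ lives in a $3$-face deserves the one-line direction-multiset argument. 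On balance your version trades the paper's opaque ``invite the reader to check'' steps for explicit counting, at the cost of a slightly longer setup; I consider that a favorable trade.
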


\begin{proof} Our proof is by induction on $n$. Suppose that $n=4$. Clearly, without loss of generality, we can assume that $|V'|=9$. Consider the following partition of the vertices of $Q_{4}$:
\begin{equation*}
V_1=\{(0,\alpha_2,\alpha_3,\alpha_4):\alpha_i\in \{0,1\},2\leq i\leq 4\}, V_2=\{(1,\alpha_2,\alpha_3,\alpha_4):\alpha_i\in \{0,1\},2\leq i\leq 4\}.
\end{equation*} Clearly, the subgraphs of $Q_{4}$ induced by $V_1$ and $V_2$ are isomorphic to $Q_{3}$. Define:
\begin{equation*}
V'_1=V_1\cap V', V'_2=V_2\cap V'.
\end{equation*} We will assume that $|V'_1|\geq |V'_2|$. We will complete the proof of the base of induction, by considering the following cases:

Case 1: $|V'_1|=8$ and $|V'_2|=1$. Clearly, any vertex from $V'_1$ is a claw-center.

Case 2: $|V'_1|=7$ and $|V'_2|=2$. It is not hard to see that $V'_1$ contains a claw-center.

Case 3: $|V'_1|=6$ and $|V'_2|=3$. Again, it is a matter of direct verification that $V'_1$ contains a claw-center.

Case 4: $|V'_1|=5$ and $|V'_2|=4$. Consider the subgraph $G_1$ of $Q_4$ induced by $V'_1$. Clearly, if $G_1$ contains a vertex of degree three, then this vertex is a claw-center. Thus, any vertex in $G_1$ has degree at most two. It is not hard to see that this implies that $G_1$ contains no isolated vertex. Moreover, since $|V'_1|=5$, we have that $G_1$ is a connected graph, and therefore it is the path of length four.

Now, let $a_1,a_2,a_3$ be the internal vertices of $G_1$, and let $b_1,b_2$ be the end-vertices of $G_1$. Clearly, we can assume that neither of $a_1,a_2,a_3$ has a neighbour in $V'_2$. Since $|V_2|=8$ and $|V'_2|=4$, we have that there are five possibilities for $V'_2$. We invite the reader to check that in four of these cases one can find a claw-center in $V'_2$, and in the final case $V'$ has a vertex $z$ such that $V'\backslash \{z\}$ induces a simple cycle.\\

Now, let us assume that the statement is true for $n-1$, and let $V'\subseteq V(Q_{n})$ be a subset with $|V'|\geq 2^{n-1}+1$. Consider the following partition of the vertices of $Q_{n}$:
\begin{equation*}
V_1=\{(0,\alpha_2,...,\alpha_n):\alpha_i\in \{0,1\},2\leq i\leq n\}, V_2=\{(1,\alpha_2,...,\alpha_n):\alpha_i\in \{0,1\},2\leq i\leq n\}.
\end{equation*} Clearly, the subgraphs of $Q_{n}$ induced by $V_1$ and $V_2$ are isomorphic to $Q_{n-1}$. Moreover, it is not hard to see that at least one of the following two inequalities is true: $|V_1\cap V'|\geq 2^{n-2}+1$ and $|V_2\cap V'|\geq 2^{n-2}+1$. Thus the proof follows from the induction hypothesis. $\square$

\end{proof}

For the case of $n=3$ we have:

\begin{proposition} Let $V'\subseteq V(Q_{3})$ and let $|V'|\geq 6$. Then at least one of the following two conditions holds:
\begin{itemize}
	\item there are four vertices in $V'$ that induce a claw;
	\item there are six vertices in $V'$ that induce a simple cycle.
\end{itemize}
\end{proposition}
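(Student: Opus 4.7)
The plan is to reduce to the case $|V'|=6$ and then split on the distance in $Q_3$ between the two vertices $\{u,v\}:=V(Q_3)\setminus V'$. The reduction is immediate: both an induced claw and an induced $6$-cycle depend only on the chosen vertex set, so if the conclusion already holds for some six-element subset of $V'$ it also holds for $V'$. Hence I may assume $|V'|=6$.

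The main observation is that any $w\in V'$ with $N_{Q_3}(w)\subseteq V'$ is the centre of an induced claw in $G[V']$, because the three neighbours of $w$ in $Q_3$ lie pairwise at distance $2$ and are therefore non-adjacent. So it will suffice to produce a vertex lying outside $N(u)\cup N(v)\cup\{u,v\}$. Using that $Q_3$ has girth $4$ (so adjacent vertices have no common neighbour) and that two vertices at distance $2$ in $Q_3$ have exactly two common neighbours, I would compute: if $d(u,v)=1$ then $\{u,v\}\subseteq N(u)\cup N(v)$ and $|N(u)\cup N(v)|=3+3-0=6$; if $d(u,v)=2$ then $u,v\notin N(u)\cup N(v)$ and $|N(u)\cup N(v)\cup\{u,v\}|=(3+3-2)+2=6$. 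In both sub-cases two vertices of $V'$ survive as claw-centres, producing case $(a)$.

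The only remaining possibility is $d(u,v)=3$, i.e.\ $u$ and $v$ antipodal. Here $N(u)\cap N(v)=\emptyset$ and $N(u)\cup N(v)=V(Q_3)\setminus\{u,v\}=V'$, so the previous counting argument gives no claw-centre and one has to work harder. I would show that $G[V']$ is itself an induced $6$-cycle: each $w\in V'$ is adjacent to exactly one of $u,v$, and its two remaining $Q_3$-neighbours lie at distance $2$ from that endpoint, hence at distance $1$ from the other endpoint, hence in $V'$. Therefore $G[V']$ is $2$-regular on six vertices; the girth-$4$ condition rules out any triangle, and a $4$-cycle component would force a $2$-cycle on the remaining two vertices, which is impossible in a simple graph. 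Thus $G[V']$ must be a single $6$-cycle, giving case $(b)$.

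The only non-routine step is the antipodal case, where the ``count surviving claw-centres'' method collapses and one must instead recognise that the two middle layers of $Q_3$ between antipodal vertices form a hexagon. This is precisely the phenomenon that forces the inductive proof of the main theorem to be anchored at $n=4$ rather than $n=3$.
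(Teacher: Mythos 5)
Your argument is correct and complete: the reduction to $|V'|=6$, the observation that a vertex of $V'$ whose full $Q_3$-neighbourhood lies in $V'$ is a claw-centre, the counting in the cases $d(u,v)\in\{1,2\}$, and the identification of $Q_3$ minus two antipodal vertices as an induced hexagon are all sound. Note that the paper states this Proposition without any proof at all, so there is nothing to compare against; your case split on the distance between the two omitted vertices supplies a clean argument that the paper omits, and your closing remark correctly identifies the antipodal case (condition (b) with a $6$-cycle rather than an $8$-cycle) as the reason the main theorem's induction must start at $n=4$.
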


\begin{acknowledgement} We would like to thank Zhora Nikoghosyan and Vahan Mkrtchyan for their attention to this work.
\end{acknowledgement}

\end{document}